\begin{document}

\author[A. F. Boix]{Alberto F.\,Boix$^1$}
\thanks{$^1$ Corresponding author address: IMUVA--Mathematics Research Institute, Universidad de Valladolid, Paseo de Belen, s/n, 47011, Valladolid, Spain. Email of the corresponding author: alberto.fernandez.boix@uva.es}

\author[A. Segura Moreiras]{Adri\'an Segura Moreiras$^2$}
\thanks{$^2$ Departament of Law, Universitat Pompeu Fabra, Ramon Trias Fargas 25-27, 08010 Barcelona, SPAIN. \\ adriansegura@hotmail.es}

\title[Determinacy on New Keynesian models]{Revisiting the determinacy on New Keynesian Models: A survey}

\keywords{Budan--Fourier Theorem; Expectational stability; Monetary policy rules; New Keynesian model; Rational expectations.}
\jel{E12; E5.}

\maketitle


\begin{abstract}
The goal of this paper is to review some analytic techniques that are potentially useful to shed light on the determinacy question that
arises in New Keynesian models as result of a combination of
several monetary policy rules; in these models, we provide conditions
to guarantee existence and uniqueness of equilibrium by means
of results that are obtained from theoretical analysis. In particular, these methods confirm the well known fact that Taylor--like rules in interest rate setting are not the
only way to reach determinacy of the rational expectations equilibrium
in the New Keynesian setting. The key technical tool we use
for that purposes is the so--called Budan--Fourier Theorem, that
we review along the paper. All the ideas and techniques presented have been already used, our contribution that might be original here are the organization and emphasis.
\end{abstract}

\section{Introduction}

The indeterminacy  of the rational expectations equilibrium (REE) poses a complication to the conduct of monetary policy. It is associated with increased volatility as there is uncertainty about which equilibrium will be realized. Hence, it is possible that agents in the economy will produce a second-best outcome in equilibrium. This means that a policy regime in place should not only be consistent with an optimal equilibrium, but also concerned about its uniqueness. 

One of the main problems that arises from New Keynesian (NK) models is the so-called multiple equilibria puzzle. This captures the idea that an undesired equilibrium could appear as a result of a specific  combination of policies. Cochrane \cite{Cochrane2011} argued that there have been many attempts to tackle this problem. However, practically all of them seem to assume that the government will have the power to blow up the economy if an unexpected equilibrium occurs.

The discussion among academics is still open with new alternative solutions to the dilemma recently proposed. In Bianchi and Nicol\`o \cite{BianchiNicolo2017}, the authors used a method that consists of augmenting the original model with auxiliary exogenous equations in order to provide the adequate number of explosive roots. Our paper, addressing the same problem, takes a different direction. We pretend to explore, analytically and numerically, the conditions under which uniqueness and existence are guaranteed in equilibrium. Therefore, the main purpose of our work is to shed light on the determinacy question. We are going to compare computational results (from simulation) with those that are obtained from theoretical analysis; however, we want to
single out that we do not produce results based on computational
simulation, our path is always, on the one hand, to exhibit
statements that are obtained from theoretical analysis and, on
the other hand, to illustrate these statements by means of
numerical simulation.

The focus of attention in the NK framework has been centered around the determinacy conditions of endogenous interest rate rules of the kind presented by Taylor in \cite{Taylor1993}. Determinacy there has often been found to depend on the size of the policy response parameters, or more specifically, the Taylor principle being followed (see \cite{Woodford2001} and \cite{BullardMitra2002}). To guarantee determinacy, the literature has also highlighted the importance of historical feedback in monetary rules, with purely forward-looking policy found to foster non-uniqueness of equilibria \cite{SvenssonWoodford2005}. Our results here display the well known fact, already pointed out among other by Woodford, Bullard and Mitra, that “Taylor-like” rules in interest rate setting are not the only way to achieve determinacy of the REE in the NK setting.

The importance of our work stems from the fact that we propose a generalization of computational results (applied  to finding the roots of a characteristic polynomial) in order to clarify the existence, and potentially, uniqueness of real roots for a linear system of equations. To do so, we use the so-called Budan-Fourier Theorem which is stated in the paper (see Theorem \ref{Budan Fourier}); this result has been already used for the same purposes in some earlier works to tackle the determinacy issue (e.g. \cite[Proof of Proposition 1]{BullardMitra2007preprint}).

The paper is organized as follows: after reviewing the Budan--Fourier Theorem in
Section \ref{the budan fourier theorem} and some determinacy terminology
we introduce in Section \ref{determinacy terminology} to shorten and simplify the statement
of several of the results, in Section \ref{Gali model}, we look at a canonical New Keynesian (NK) model and derive the conditions for determinacy of equilibrium when the money supply follows an exogenous path.  In Section \ref{models with lagged data}, we consider a model in which a monetary authority responds to lagged values of output, inflation, and interest rate deviations. In Section \ref{behavioral model}, we explore stability conditions for a model in which agents do not fully understand future policies. Finally, in
Section \ref{potential limitations}, we explore the potential limitations of these methods to
tackle more involved models; even in this case, we illustrate that the
Budan--Fourier Theorem is still a useful tool to provide necessary
determinacy conditions that are easy to check in practice, because they
only require polynomial evaluation. It is well known that, when
the characteristic equation is of degree two, there are several more
elementary ways to tackle this issue; for instance, Chatelain and
Ralf \cite[Proposition 1]{ChatelainRalf2018} use the fact that, when
the characteristic equation is of degree two, the eigenvalues
are non--linear functions of the trace and the determinant of the corresponding
matrix \cite[pages 63--67]{Azariadis1993}.

We want also to mention here in what linear rational
expectational models we are interested to tackle the
determinacy issue in this paper; indeed, all our models can be cast in the
form
\[
\Gamma_0 y(t)=\Gamma_1 y(t-1)+\Psi z(t),
\]
($t=1,\ldots ,T$), where $z(t)$ is an exogenously evolving, possibly
serially correlated, random disturbance, and $\Gamma_0$ is an
invertible matrix. This fits into the framework studied in
\cite{BlanchardKhan1980}, where the determinacy issue boils down
to count how many eigenvalues of $\Gamma_0^{-1}\Gamma_1$ lie inside or
outside the complex unit disk. It is worth noting that both in
\cite{Sims2002} and \cite{LubikSchorfheide2003} the authors tackle
more general models than the ones considered here, in particular they allow
$\Gamma_0$ to be singular; however, in both works they also need to
calculate some eigenvectors of certain matrices, and it is well known that, if
one wants to do so in practice (e.g. numerically), then one often has
to calculate at once both eigenvalues and eigenvectors, for instance
using the classical Power method \cite[Chapter 7]{Householder1975}. For this
reason, we hope that the methods we review in this paper will be useful
to tackle more complicated models. Again, as we already pointed out, what might be original in this manuscript is the organization of the material and the emphasis, hoping that will be potentially useful for researchers working in this subject. The list of references at the end gives an indication of the provenance of the fundamental ideas and techniques, and might suggest directions for additional research.

All our results will be illustrated through numerical examples that were done
with Matlab \cite{MATLAB:2015}.

\section{The Budan--Fourier Theorem}\label{the budan fourier theorem}
Due to the importance that the Budan--Fourier Theorem plays in this
paper, our goal now is to review it for the convenience of the
reader, referring to \cite[Theorem 1]{Akritas1982} and the references
given therein for further details (see also
\cite[page 173]{Barbeau1989}). First of all, we define the notion
of sign variation.

\begin{df}\label{sign variation}
We say that a \textbf{sign variation exists} between two nonzero
numbers $c_p$ and $c_q$ ($p<q$) in a finite or infinite sequence of
real numbers $c_1,c_2,c_3,\ldots$ if the following holds.

\begin{enumerate}[(i)]

\item If $q=p+1,$ then $c_p$ and $c_q$ have opposite signs.

\item If $q\geq p+2,$ then $c_{p+1},\ldots ,c_{q-1}$ are all zero
and $c_p$ and $c_q$ have opposite signs.

\end{enumerate}
\end{df}
Keeping in mind this terminology, the Budan--Fourier Theorem
can be phrased in the below way.

\begin{teo}[Budan--Fourier]\label{Budan Fourier}
Let $P(x)$ be a polynomial with real coefficients and of degree
$d,$ and denote by $P^{(i)}$ its $i$th derivative; moreover, set
$P_{seq} (x):=(P(x),P'(x),P^{(2)}(x),\ldots, P^{(d)} (x)).$ Finally, given
real numbers $a<b,$ denote by $v_a$ (respectively, $v_b$) the number
of sign variations of $P_{seq} (a)$ (respectively, $P_{seq} (b)$). Then, the
following holds.

\begin{enumerate}[(i)]

\item $v_b\leq v_a.$

\item $r\leq v_a-v_b,$ where $r$ denotes the number of real roots of
the equation $P(x)=0$ located in the interval $(a,b).$

\item $v_a-v_b-r$ is either zero or an even number.

\end{enumerate}
\end{teo}

\section{Determinacy terminology}\label{determinacy terminology}
In order to simplify and shorten the statements we obtain in this
paper about determinacy of several models, our aim now is to introduce
a clearer notation on the determinacy question; the interested reader on semi--algebraic sets is referred to
\cite[Chapter 2]{RealAlgebraicGeometry} and the references given
therein for additional information.

\begin{df}\label{determinacy terms in action}
Let $n\geq 1$ be an integer, and let $S\subseteq\R^n$ be a semi--algebraic
set over $\R$ (that is, a subset of $\R^n$ satisfying a boolean
combination of polynomial equations and inequalities with
real coefficients). In practice, $S$ will be the space where the
parameters of our model lie.

\begin{enumerate}[(i)]

\item We say that our model is \textbf{unconditionally determined} if, for
any $(x_1,\ldots ,x_n)\in S,$ our model is determined.

\item We say that our model is \textbf{generically determined} if
there exists a semi--algebraic subset $S'\subset S$ such that our model is determined for any
$(x_1,\ldots ,x_n)\in S'.$

\end{enumerate}
Hereafter, we refer to the set $S$ as the \textbf{parameter space}, and
to $S'$ as the \textbf{determinacy region}.
\end{df}

\section{A dynamic linear system}\label{Gali model}

In this section we show that a set of non-restrictive assumptions on the structural parameters of the underlying economic model are sufficient for the uniqueness of the equilibrium. The case we consider was presented by Gal\'i in \cite[3.4.2]{GaliKeynesian2015}, but in contrast to the numerical methods in the original, here we also show the results analytically. In this particular case, one is interested in the analysis of the so-called timing structure: "cash-when-I'm-done" (CWID). Eventually, we are going to show that an exogenous money growth rule, under this specific setup, is always going to give us unconditional determinacy.
In the dynamic linear system considered in the paper, one wants to show that the matrix
\[
A:=\begin{pmatrix}[r] 1+\sigma\eta& 0& 0\\ -k& 1& 0\\ 0& -1& 1
\end{pmatrix}^{-1}\begin{pmatrix} \sigma\eta& \eta& 1\\ 0& \beta&
 0\\ 0& 0& 1\end{pmatrix},
\]
(where $k>0,$ $\sigma>0,$ $\eta>0$ and $\beta\in (0,1)$
are real numbers) has two eigenvalues inside the unit disk\footnote{Throughout this paper, by the \textbf{unit disk} we mean the set
$\{z\in\mathbb{C}:\ \lvert z\rvert <1\}.$} and
the remainder one is outside, because by \cite{BlanchardKhan1980}
this is equivalent to say that  the corresponding dynamic linear
system has a unique stationary solution; our goal in this section
is to show that this is true. Notice that, in this
case, our parameter space is
\[
S=\{(k,\sigma,\eta,\beta)\in\R^4:\ k>0,\ \sigma>0,\ \eta>0,\ 0< \beta<1\}.
\]
We deduce the unconditional determinacy of this model from the below
technical statement.

\begin{prop}\label{the conditions to apply Budan Fourier}
Let $A$ be a $3\times 3$ matrix with real entries such that its
characteristic polynomial is $P(x)=x^3-bx^2+cx-d,$ where we suppose that $b>1,$ $c>0,$ $d\in (0,1),$ $1-d<b-c,$ and $bc-d>0.$ Then, the
following assertions hold.

\begin{enumerate}[(i)]

\item All the real roots of $P$ are located in the interval $(0,b).$

\item $P$ has at least one real root in the interval $(1,b).$

\item If $P$ has two complex roots, then both are located in the unit disk.

\item If all the roots of $P$ are real, then $P$ has at least one real
root in the interval $(0,1).$

\item $P$ has a single root in the interval $(1,b).$

\end{enumerate}

\end{prop}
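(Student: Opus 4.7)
My plan is to dispatch parts (i)--(iv) by direct sign analysis and Vieta's formulas, and then to attack (v), which I expect to be the main obstacle, by applying the Budan--Fourier Theorem and using (ii) and (iv) as levers.

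For (i), on $(-\infty, 0]$ every term of $P(x)$ is nonpositive and the constant term $-d$ is strictly negative, so $P < 0$ there; on $[b,\infty)$ one checks that the critical points of $P$, i.e.\ the roots of $P'(x) = 3x^2 - 2bx + c$, lie in $[0, 2b/3] \subsetneq [0, b)$, so $P$ is strictly increasing past $b$ with $P(b) = bc - d > 0$. For (ii), the hypothesis $1 - d < b - c$ rewrites as $P(1) < 0$, so together with $P(b) > 0$ the intermediate value theorem supplies a real root in $(1, b)$. For (iii), Vieta gives that the product of the three complex roots equals $d$; if two of them form a conjugate pair $\alpha, \bar{\alpha}$, the remaining real root $r$ must lie in $(1, b)$ by (ii), and then $|\alpha|^2 = d/r < 1$ since $d < 1 < r$. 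For (iv), by (i) the three real roots are positive with product $d < 1$, so at least one of them must lie in $(0, 1)$.

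The delicate step is (v), since (ii) only guarantees at least one root in $(1, b)$ and we need exactness. My plan is to apply Theorem \ref{Budan Fourier} to the sequence $P_{seq} = (P, P', P'', P''')$. Computing signs, I get $v_0 = 3$ from $(-d, c, -2b, 6)$ and $v_b = 0$ from $(bc - d, b^2 + c, 4b, 6)$. At $x = 1$ the outer entries are $P(1) < 0$ and $P'''(1) = 6 > 0$, which forces $v_1$ to be odd, hence $v_1 \in \{1, 3\}$. If $v_1 = 1$, Budan--Fourier on $(1, b)$ caps the number of roots there at $v_1 - v_b = 1$, and (ii) supplies at least one, so exactly one. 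If $v_1 = 3$, Budan--Fourier on $(0, 1)$ gives at most $v_0 - v_1 = 0$ roots there, which by the contrapositive of (iv) forces $P$ to have only one real root; by (ii) that root lies in $(1, b)$. In either case $(1, b)$ contains exactly one real root of $P$.
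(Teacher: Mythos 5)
Your proof is correct. Parts (i)--(iv) follow the paper's own route almost exactly: sign analysis on $(-\infty,0]$ and $[b,\infty)$ for (i) (the paper uses the algebraic identity $P(b+\mu)=\mu(b+\mu)^2+\mu c+(bc-d)$ where you use monotonicity past the critical points of $P$, but this is cosmetic), Bolzano for (ii), and Vieta's relation $\lambda_1\lambda_2\lambda_3=d$ together with the root from (ii) for (iii) and (iv). The genuine divergence is in (v). The paper also applies Budan--Fourier on $(1,b)$ with $v_b=0$, but it then runs a four-way case analysis on the signs of $3-2b+c$ and $6-2b$ and \emph{eliminates} the case $v_1=3$ outright: that case forces $b\geq 3$ and $c\geq 2b-3$, which contradicts $b>c$ (a consequence of $1-d<b-c$ and $d<1$). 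You instead keep the case $v_1=3$ alive, observing via the parity of the sign sequence (first entry $P(1)<0$, last entry $6>0$) that $v_1\in\{1,3\}$, and you dispose of $v_1=3$ by a second application of Budan--Fourier on $(0,1)$: there $v_0-v_1=3-3=0$ rules out roots in $(0,1)$, whence by the contrapositive of (iv) the polynomial has a complex conjugate pair and a unique real root, which by (ii) sits in $(1,b)$. Both arguments are sound; the paper's buys the slightly stronger fact that $v_1=1$ always holds under the hypotheses, while yours is a bit more economical in that it never needs the inequality $b>c$ beyond its role in giving $P(1)<0$, and it showcases a nice interplay between the two subintervals $(0,1)$ and $(1,b)$ and items (ii) and (iv).
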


\begin{proof}
First of all, given $x\in [0,+\infty)$ notice that $P(-x)=-x^3-bx^2-cx-d<0,$
because $x\geq 0$ and $b>0,$ $c>0$ and $d>0$ by our assumptions; this
shows that $P$ has no roots in the interval $(-\infty, 0].$ Moreover, given $\mu\geq 0$
a real number, it follows that
\[
P(b+\mu)=\mu (b+\mu)^2+\mu c+(bc-d)>0;
\]
indeed, $P(b+\mu)>0$ because we know by assumption that
$P(b)=bc-d>0$ and the remainder terms of $P(b+\mu)$ are also non--negative. Summing
up, our calculations show that $P(x)<0$ for any $x\in (-\infty, 0],$ and
$P(x)>0$ for any $x\in [b,+\infty).$ These two facts show that
all the roots of $P$ are located in the interval $(0,b),$ and therefore
part (i) holds.

Now, the reader can easily check that $P(1)=1-b+c-d<0$ again
because of our assumptions; in this way, since
$P(1)<0$ and $P(b)>0$ Bolzano's theorem \cite[Proposition 1.2.4]{RealAlgebraicGeometry} guarantees the existence
of at least one real root of $P$ in the interval $(1,b),$ hence part
(ii) is also true. In what follows, we denote by $\lambda$ this
root, and let $\lambda_2,\lambda_3$ be the remainder roots of $P.$ Keeping
in mind this notation, one has that $\lambda\cdot\lambda_2\cdot\lambda_3=d$
if and only if $\lambda_2\cdot\lambda_3=d/\lambda.$ This equality shows that $\lambda_2\cdot\lambda_3<1$ because
$d\in (0,1),$ and $\lambda >1;$ now, we want to distinguish two cases.

On the one hand, suppose that $\lambda_2$ and $\lambda_3$
are complex numbers, then $\lambda_3=\overline{\lambda_2}$ (where
$\overline{(-)}$ denotes complex conjugation) and therefore
$\lvert\lambda_2\rvert^2=\lambda_2\cdot\overline{\lambda_2}
=\lambda_2\cdot\lambda_3<1\Longleftrightarrow\lvert\lambda_2\rvert<1,$
and this shows that both $\lambda_2$ and $\lambda_3$ are located in
the unit disk, as claimed.

On the other hand, if $\lambda_2$ and $\lambda_3$ are
real numbers, then either $\lambda_2\in (0,1)$ or
$\lambda_3\in (0,1)$ because $\lambda_2\cdot\lambda_3<1.$

Our final aim is to show that $P$ has a single root in the interval
$(1,b),$ and for this we plan to use the Budan--Fourier
Theorem (see Theorem \ref{Budan Fourier}); first of all, the sequence of $P$ and all its non--zero derivatives
(aka Fourier sequence) turns out to be
\[
P_{seq} (x):=(x^3-bx^2+cx-d,3x^2-2bx+c,6x-2b,6).
\]
Now, we evaluate this sequence respectively at $1$ and $b;$ namely,
\[
P_{seq} (1):=\left(1-b+c-d,3-2b+c,6-2b,6\right),\
P_{seq} (b):=\left(bc-d,b^2+c,4b,6\right)
\]
Let $v_1$ (respectively, $v_b$) be the number of signs variations
of $P_{seq} (1)$ (respectively, $P_{seq} (b)$), and notice that
$v_b=0$ because we know by our assumptions
that $P(b)=bc-d>0,$ $b^2+c>0$ and $4b>0.$

Remember that the Budan--Fourier Theorem says that the number of real
roots of $P$ located in the open interval $(1,b)$ is less or equal
than $v_1-v_b= v_1;$ our final aim will be to show that $v_1=1.$ We need
to consider four cases

Firstly, if $3-2b+c\leq 0$ and $6-2b\leq 0,$ then clearly $v_1=1.$ Secondly, if
$3-2b+c\leq 0$ and $6-2b\geq 0,$ then again $v_1=1.$ Thirdly, if
$3-2b+c\geq 0$ and $6-2b\geq 0,$ then once again $v_1=1$. Finally, assume
to reach a contradiction that $3-2b+c\geq 0$ and $6-2b\leq 0,$ so
$v_1=3.$ Notice that the inequality $6-2b\leq 0$ is equivalent
to say that $b\geq 3.$ On the other hand, the inequality $3-2b+c\geq 0$
is equivalent to $c\geq 2b-3,$ and this inequality implies, since
$b>c,$ that $b>2b-3,$ hence $b<3,$ a contradiction. Therefore, this
fourth case can not happen.

Summing up, we have finally checked that $v_1=1,$ which implies that
there is at most $1$ real root in the interval $(1,b)$ by the
Budan--Fourier Theorem, and since we already checked that
in this interval $P$ has at least one real root, we can finally conclude
that $P$ has a single root in the interval $(1,b),$ just what
we finally want to show.
\end{proof}
Now, building upon Proposition \ref{the conditions to apply Budan Fourier}, we
are ready to prove the main result of this section, keeping in
mind the notation we introduced at the very beginning.

\begin{teo}\label{some properties of matrix}
The following assertions hold.

\begin{enumerate}[(i)]

\item One has that
\[
A=\begin{pmatrix}[l] \frac{\sigma\eta}{1+\sigma\eta}& \frac{\eta}{1+\sigma\eta}
& \frac{1}{1+\sigma\eta}\\ \frac{k\sigma\eta}{1+\sigma\eta}& 
\frac{k\eta}{1+\sigma\eta}+\beta& \frac{k}{1+\sigma\eta}\\ 
\frac{k\sigma\eta}{1+\sigma\eta}& \frac{k\eta}{1+\sigma\eta}+\beta&
\frac{k}{1+\sigma\eta}+1
\end{pmatrix}
\]

\item The eigenvalues of $A$ are exactly the roots of the polynomial $P(x)=x^3-bx^2+cx-d,$
where
\[
b=\frac{\sigma\eta+k(1+\eta)}{1+\sigma\eta}+1+\beta,\ 
c=(1+\beta)\frac{\sigma\eta}{1+\sigma\eta}+
\frac{k\eta}{1+\sigma\eta}+\beta,\ d=\frac{\beta\sigma\eta}{1+\sigma\eta}.
\]

\item Our model is unconditionally determined.






\end{enumerate}

\end{teo}

\begin{proof}
First of all, part (i) is just an issue of inverting a matrix, and afterwards
a multiplication of matrices, and both steps are left to the
interested reader. Secondly, it is straightforward to check that $P(x)$ is
the characteristic polynomial of $A,$ hence part (ii) holds too.

To prove the unconditional determinacy of this
model, we only need to check that the
assumptions of Proposition \ref{the conditions to apply Budan Fourier}
hold; indeed, it is clear that $b>1,$ $c>0$ and $d\in (0,1).$ On the
other hand, notice that $P(1)=1-b+c-d=\frac{-k}{1+\sigma\eta}<0$ again
because $k$, $\sigma$ and $\eta$ are strictly positive; finally, one
also has that
\[
P(b)=bc-d=\left(\frac{\sigma\eta+k(1+\eta)}{1+\sigma\eta}+1+\beta\right)
\left((1+\beta)\frac{\sigma\eta}{1+\sigma\eta}+
\frac{k\eta}{1+\sigma\eta}+\beta\right)-\frac{\beta\sigma\eta}{1+\sigma\eta}>0.
\]
Summing up, we have checked that we are under the assumptions of
Proposition \ref{the conditions to apply Budan Fourier}, and therefore
this Proposition implies that our model is
unconditionally determined, just what we finally wanted to show.
\end{proof}

\begin{rk}
We want to single out that Proposition \ref{the conditions to apply Budan Fourier} does not cover
the model analyzed by Bullard and Mitra in \cite[Proposition 3 and
Appendix C]{BullardMitra2002} because, in their model, $d<0.$ Finally, the Routh--Hurwitz
criterion \cite[Theorem 1.1]{MeinsmaRH}, used to prove the stability of one of the models
studied by Gabaix (see \cite[Proposition 5.3]{GabaixNK} and \cite[Proposition 9.7]{GabaixNKOA}), only implies
in our case that all the eigenvalues of $P$ have positive real part, so it is not useful for
our purposes. In what follows (see Sections \ref{models with lagged data}
and \ref{behavioral model}), we analyze these models.
\end{rk}
We end this section by exhibiting some numerical examples to illustrate Theorem
\ref{some properties of matrix}; as we already pointed out in the
Introduction of this manuscript, the unjustified calculations in all
the examples we present in this paper were done with Matlab \cite{MATLAB:2015}. Remember
that our path here and in the remainder sections is, on the one
hand, to provide results that are obtained from theoretical analysis
and, on the other hand, to illustrate these results by means of
numerical examples.

\begin{ex}\label{example Gali result}
First of all, we calibrate our parameters in the following way: $\beta=0.99,$ $\sigma=0.5,$
$\eta=1.2$ and $k=0.3.$ In this case,
\[
A=\begin{pmatrix} 0.3750 & 0.7500 & 0.6250\\
0.1125 & 1.2150  & 0.1875\\
0.1125  &  1.2150  &  1.1875\end{pmatrix},
\]
its characteristic polynomial is
$x^3-2.7775x^2+1.9612x-0.3712$ (remember that Theorem \ref{some properties of matrix} guarantees
the existence of a unique eigenvalue in the interval
$(1,2.7775)$), and its eigenvalues are $0.3107,$ $0.6620$ and $1.8048.$ Therefore, in this
case, the unique eigenvalue of $A$ contained in $(1,2.7775)$ is $1.8048.$

Secondly, now we calibrate our parameters following
Woodford \cite{Woodford1999}; indeed, in this case we pick
$\beta=0.99,$ $\sigma=0.157,$
$\eta=1.2$ and $k=0.024.$ In this case,
\[
A=\begin{pmatrix} 0.1585  &  1.0098  &  0.8415\\
    0.0038  &  1.0142  &  0.0202\\
    0.0038  &  1.0142  &  1.0202\end{pmatrix},
\]
its characteristic polynomial is
$x^3-2.1930x^2+1.3297x-0.1569$ (remember that Theorem \ref{some properties of matrix} guarantees
the existence of a unique eigenvalue in the interval
$(1,2.1930)$), and its eigenvalues are $0.1547,$ $0.8634$ and $1.1749.$ Therefore, in this
case, the unique eigenvalue of $A$ contained in $(1,2.1930)$ is $1.1749.$

Finally, we calibrate our parameters following
Clarida, Gal\'i and Gertler \cite{ClaridaGaliGertler2000}; indeed, in this case we pick
$\beta=0.99,$ $\sigma=1,$
$\eta=1.2$ and $k=0.3.$ In this case,
\[
A=\begin{pmatrix} 0.5455  &  0.5455 &   0.4545\\
    0.1636  &  1.1536  &  0.1364\\
    0.1636  &  1.1536  &  1.1364\end{pmatrix},
\]
its characteristic polynomial is
$x^3-2.8355x^2+2.2391x-0.5400$ (remember that Theorem \ref{some properties of matrix} guarantees
the existence of a unique eigenvalue in the interval
$(1,2.8355)$), and its eigenvalues are $0.5455,$ $0.5784$ and $1.7116.$ Therefore, in this
case, the unique eigenvalue of $A$ contained in $(1,2.8355)$ is $1.7116.$
\end{ex}

\section{Some rules with lagged data}\label{models with lagged data}
Now, we would like to explore a more realistic version of the model. In what follows, policymakers are assumed to react to changes throughout particular policies, that were recorded in the past. In order to explore this fact, our next goal will be to recover and extend \cite[Proposition 3 and
Appendix C]{BullardMitra2002}; before doing so, we want to review the
following elementary Linear Algebra technical fact.

\begin{lm}\label{eigenvalues inside and outside}
Let $A$ be an invertible matrix with real entries. Then, $A$ has a
unique eigenvalue outside the unit disk if and only if $A^{-1}$
has a unique eigenvalue inside the unit disk.
\end{lm}

\begin{disc}\label{starting BM point}
For certain non--inertial lagged data rules \cite[pages 1118--1119]{BullardMitra2002}, the matrix relevant for uniqueness is the below one:
\[
B=\frac{1}{\varphi_x+k\varphi_{\pi}}\begin{pmatrix} 0& -\beta\varphi_{\pi}
& 1\\ 0& \beta\varphi_x & k\\ \sigma (\varphi_x+k\varphi_{\pi})& 
\varphi_x+(k+\beta\sigma)\varphi_{\pi}& -\sigma\end{pmatrix},
\]
where $k>0,$ $\sigma >0,$ $\beta\in (0,1),$ $\varphi_x\geq 0,$ $\varphi_{\pi}\geq 0,$ and
either $\varphi_x$ or $\varphi_{\pi}$ is strictly positive. By
Lemma \ref{eigenvalues inside and outside}, $B$ has two eigenvalues
inside the unit disk and one outside if and only if $B^{-1}$ has
one eigenvalue inside the unit disk and the remainder two ones outside. As
pointed out in \cite[Appendix C]{BullardMitra2002}, the characteristic
polynomial of $B^{-1}$ is $P(x)=x^3-bx^2+cx+d,$ where
\[
b=1+\frac{1}{\beta}+\frac{k}{\beta\sigma}>2,\ 
c=\frac{1}{\beta}-\frac{\varphi_x}{\sigma},\ 
d=\frac{\varphi_x+k\varphi_{\pi}}{\beta\sigma}.
\]
The reader will easily note that, in this model, our parameter space is
\begin{align*}
S=& \{(k,\sigma,\beta,\varphi_x, \varphi_{\pi})\in\R^5:\ k>0,
\ \sigma >0, 0<\beta<1,\ \varphi_x\geq 0,\ \varphi_{\pi}>0\}\\
& \cup\{(k,\sigma,\beta,\varphi_x, \varphi_{\pi})\in\R^5:\ k>0,
\ \sigma >0, 0<\beta<1,\ \varphi_x> 0,\ \varphi_{\pi}\geq 0\}.
\end{align*}
\end{disc}
Motivated by the content of Discussion \ref{starting BM point}, our next
goal will be to prove the following:

\begin{prop}\label{generalized BM model}
Let $P(x)=x^3-bx^2+cx+d\in\R [x],$ where $b>0,$ and $d>0.$ Then, the following assertions hold.

\begin{enumerate}[(i)]

\item $P$ has exactly one negative real root.

\item If $b>2,$ then $P$ has at least one root outside the unit disk.

\item $P$ has exactly one real root at $(-1,0)$ if and only
if $P(-1)<0.$

\item If $P(1)<0,$ then $P$ has exactly one real root at $(0,1).$

\item If $P(1)>0,$ and $b>2,$ then $P$ has a single real root at $(-\infty, 0),$ and
the other two roots are outside the unit disk.

\item If $P(-1)<0$ and $P(1)<0,$ then $P$ has exactly two real roots
in the interval $(-1,1)$ and the remainder real root is bigger strictly than $1.$

\item (Cf.\,\cite[Proposition 3]{BullardMitra2002}) Suppose that
$b>2.$ If $P(-1)<0$ and
$P(1)>0,$ then $P$ has exactly one root at $(-1,0),$ and the
remainder two ones are outside the unit disk.

\item Suppose that
$b>2.$ If $P(-1)>0$ and $P(1)<0,$ then $P$ has exactly one root
at $(0,1),$ and two roots whose real part is bigger than $1$ in absolute value.

\item Suppose that
$b>2.$ If $P(-1)>0$ and $P(1)>0,$ then $P$ has all its roots outside
the unit disk.

\item Suppose that
$b>2.$ $P$ has exactly one root at the unit disk and the remainder ones outside
if and only if one and only one of the following four conditions is satisfied.

\begin{itemize}

\item $P(-1)<0$ and $P(1)>0.$

\item $P(-1)=0,$ $P'(-1)\neq 0$ and $P(1)<0.$

\item $P(-1)>0$ and $P(1)<0.$

\item $P(-1)>0,$ $P(1)=0$ and $P'(1)\neq 0.$

\end{itemize}

\end{enumerate}

\end{prop}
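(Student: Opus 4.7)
The proof of Proposition \ref{generalized BM model} is a collection of root-location statements for $P(x)=x^3-bx^2+cx+d$, and my plan is to establish parts (i)--(ix) as building blocks and then assemble them into part (x). Throughout I will lean on three tools: Bolzano's theorem applied at the test points $-1,0,1,b$ and $\pm\infty$, Vieta's formulas relating the roots $\lambda_1,\lambda_2,\lambda_3$ to the coefficients, and a dichotomy on whether $\lambda_2,\lambda_3$ are real or a complex conjugate pair.

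For part (i) the cleanest route is Descartes' rule of signs applied to $P(-x)=-x^3-bx^2-cx+d$: regardless of the sign of $c$, the coefficient sequence has exactly one sign change, giving exactly one negative real root of $P$. Call it $\lambda_1$. For part (ii), Vieta gives $\lambda_2+\lambda_3=b-\lambda_1>b>2$, so the triangle inequality excludes the possibility $|\lambda_2|,|\lambda_3|\leq 1$. Parts (iii), (iv), (vi) are essentially bookkeeping: $P(0)=d>0$ is fixed, $P(-\infty)<0$ and $P(+\infty)>0$ are fixed, so the signs of $P(-1)$ and $P(1)$ determine whether the unique negative root is in $(-1,0)$ or in $(-\infty,-1)$ and whether a root appears in $(0,1)$ or $(1,\infty)$; uniqueness in each sub-interval follows because having two positive roots in $(0,1)$ would force $\lambda_2+\lambda_3<2$, which we can compare against $b-\lambda_1$.

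The technical heart of the proof lies in parts (v), (vii), (viii), (ix), where one must prove that $\lambda_2,\lambda_3$ lie outside the unit disk. For each I would split into two cases. In the \emph{real} case I pin down the signs by analyzing $P$ on $(0,1)$ and $(1,\infty)$: the hypothesis on $P(1)$ dictates whether $1$ separates the two positive roots, and the assumption $b>2$ combined with the location of $\lambda_1$ (either $\lambda_1>-1$ in (vii) or $\lambda_1<-1$ in (viii)--(ix)) excludes the scenario in which both positive roots lie in $(0,1)$, because that would force $b<2$. In the \emph{complex} case $\lambda_{2,3}=re^{\pm i\theta}$, the identity $2r\cos\theta=b-\lambda_1$ together with the lower bound on $b-\lambda_1$ from the hypotheses yields $r\cos\theta>1$, hence $r>1$.

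Part (x) is then the synthesis. Each of the four conditions C1--C4 is converted to the conclusion by invoking parts (vii) and (viii) in the strict cases, and by factoring $P(x)=(x+1)Q(x)$ or $P(x)=(x-1)Q(x)$ in the boundary cases $P(\pm 1)=0$ (where $Q$ is a quadratic whose discriminant and Vieta sums are directly computable from $b,c,d$, so the modulus of its roots can be read off). For the converse, assuming $P$ has one root in the relevant region and two outside, I would case-split on whether $\lambda_1$ lies in $(-1,0)$, equals $-1$, or is less than $-1$, and correspondingly whether a positive root lies in $(0,1)$, equals $1$, or lies in $(1,\infty)$, and match each configuration to exactly one of C1--C4. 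The main obstacle I expect is not any single step but the bookkeeping around the boundary cases: verifying that when $P(\pm 1)=0$ the additional derivative condition $P'(\pm 1)\neq 0$ is what rules out a double root landing on the unit circle, and that in the complex conjugate scenario the strict inequality $r>1$ actually holds rather than merely $r\geq 1$.
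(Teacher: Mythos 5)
Your overall architecture matches the paper's quite closely: Descartes' rule for (i), Vieta plus the sum $\lambda_2+\lambda_3=b-\lambda_1>2$ for (ii), and the real/complex-conjugate dichotomy (with $2\,\mathrm{Re}(\lambda_2)>2$ forcing modulus greater than one) for the ``outside the unit disk'' conclusions in (v) and (vii)--(ix); your treatment of the real case of (v) via the sign of $(1-\lambda_2)(1-\lambda_3)$ is a legitimate substitute for the paper's $\varepsilon$--Bolzano argument. The genuine divergence is in (iii) and (iv): the paper proves these by evaluating the Fourier sequence $P_{seq}$ at $-1$, $0$ and $1$ and running a sign-variation case analysis with the Budan--Fourier Theorem (the tool the paper is showcasing), whereas you use only Bolzano and root bookkeeping. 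Your route is more elementary, and for part (x) --- which the paper folds into ``immediate consequences'' without ever discussing the boundary cases $P(\pm 1)=0$ --- you are actually more careful than the source: factoring out $(x\pm 1)$ and analyzing the residual quadratic is exactly the right way to handle those cases, and you correctly identify that the derivative conditions $P'(\pm 1)\neq 0$ serve to exclude a double root landing on the unit circle.

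There is, however, one step in your sketch that does not close as written. For the uniqueness claims in (iii), (iv) and (vi) you argue that two positive roots in $(0,1)$ ``would force $\lambda_2+\lambda_3<2$, which we can compare against $b-\lambda_1$.'' But those three items do not assume $b>2$; they only assume $b>0$, so the inequality $b-\lambda_1>b$ yields no contradiction with $\lambda_2+\lambda_3<2$ when $b$ is small --- the sum comparison only works in the parts where $b>2$ is hypothesized. The claims are still true, but you need a different reason: write $P(1)=(1-\lambda_1)(1-\lambda_2)(1-\lambda_3)$. By (i) exactly one root is negative, so if two roots lay in $(0,1)$ every factor would be strictly positive, giving $P(1)>0$ and contradicting the hypothesis $P(1)<0$ of (iv) and (vi); the double-root case is covered by the same product. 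With that repair, uniqueness in $(0,1)$ follows with no assumption on the size of $b$, and uniqueness in $(-1,0)$ for (iii) is automatic from (i) and needs no sum argument at all.
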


\begin{proof}
First of all, the negative real roots of $P(x)$ are the positive real roots
of $Q(x)=P(-x)=-x^3-bx^2-cx+d;$ let $v_Q$ be the number of sign variations
of the coefficients of $Q.$ Independently of $c,$ one can see
that $v_Q=1,$ so Descartes' rule of signs \cite[Proposition 1.2.14]{RealAlgebraicGeometry}
implies that $P$ has exactly one negative real root.

Hereafter, let $\lambda_1,\lambda_2 ,\lambda_3$ be the roots of $P.$ Without
loss of generality, suppose that $\lambda_1<0,$ so it follows that
$\lambda_2+\lambda_3=b-\lambda_1>b>2.$ On the one hand, if $\lambda_2$ and $\lambda_3$ are real, then the
above upper inequality shows that either $\lambda_2>1$ or
$\lambda_3>1;$ on the other hand, if $\lambda_2$ and
$\lambda_3$ are complex conjugates, then again the above inequality
shows that their real part is bigger strictly than $1.$ Anyway, this
shows that $P$ has at least one root outside the unit disk.

Now, let $r$ be the number of real roots of $P$ at $(-1,0);$ notice that
\[
P_{seq} (x):=(x^3-bx^2+cx+d,3x^2-2bx+c,6x-2b,6).
\]
Now, we evaluate this sequence respectively at $-1$ and $0;$ namely,
\[
P_{seq} (-1):=\left(-1-b-c+d,3+2b+c,-6-2b,6\right),\
P_{seq} (0):=\left(d,c,-2b,6\right)
\]
Let $v_{-1}$ (respectively, $v_0$) be the number of signs variations
of $P_{seq} (-1)$ (respectively, $P_{seq} (0)$), and remember
that $r\leq v_{-1}-v_0$ by the Budan--Fourier Theorem. Moreover, notice
also that $v_0=2$ because $d>0,$ and $b>0.$ Next, there are four cases
to distinguish; first of all, if $-1-b-c+d<0$ and $3+2b+c<0,$ then
$c>-1-b+d,$ and therefore $0>3+2b+c>3+2b-1-b+d=2+b+d,$ a contradiction because
both $b$ and $d$ are strictly positive, hence this case can not happen. Secondly, if
$-1-b-c+d<0$ and $3+2b+c>0,$ then $v_{-1}=3$ and therefore, combining
Bolzano jointly with Budan--Fourier, we can guarantee that there is a
unique real root at $(-1,0).$ Thirdly, if $-1-b-c+d>0$ and $3+2b+c<0,$ then
$v_{-1}=2,$ hence no real roots at $(-1,0)$ by Budan--Fourier. Finally, if
if $-1-b-c+d>0$ and $3+2b+c>0,$ then once more $v_{-1}=2,$ so there
are no real roots at $(-1,0).$ Summing up, we have checked that $P$ has exactly one real root at $(-1,0)$ if and only
if $P(-1)<0,$ as claimed.

Next, we looked at the interval $(0,1)$ assuming that $P(1)<0;$ notice that
\[
P_{seq} (0):=\left(d,c,-2b,6\right),\
P_{seq} (1):=\left(1-b+c+d,3-2b+c,6-2b,6\right).
\]
Here, there are three cases to consider, keeping in mind that we are assuming
that $1-b+c+d<0;$ first of all, if $3-2b+c<0$ and either $6-2b<0$ or
$6-2b>0,$ then $v_1=1,$ so $v_0-v_1=2-1=1,$ and therefore Bolzano plus
Budan--Fourier ensure the existence of a unique real root at
$(0,1).$ Secondly, if $3-2b+c>0$ and $6-2b<0,$ then $v_1=3$ and thus
$v_0-v_1=-1,$ so this case can not happen because $0\leq v_0-v_1.$ Finally, if
$3-2b+c>0$ and $6-2b>0,$ then again $v_1=1,$ so $v_0-v_1=2-1=1,$ and therefore Bolzano plus
Budan--Fourier ensure the existence of a unique real root at
$(0,1).$ Summing up, we have checked that $P$ has exactly one real root at $(0,1)$ if $P(1)<0,$ as claimed.

Now, assume that $P(1)>0,$ and as above denote by $\lambda_1,\lambda_2 ,\lambda_3$ the roots of $P.$ Without
loss of generality, suppose that $\lambda_1<0,$ before we already saw that, if
$\lambda_2$ and $\lambda_3$ are complex, then both have real part strictly
bigger than $1,$ in particular they lie outside the unit disk. Moreover, we
also checked that, if $\lambda_2$ and $\lambda_3$ are real and
positive, then at least one of them is strictly bigger than $1,$ without
loss of generality suppose that $\lambda_2>1.$ If $\lambda_2
=\lambda_3,$ then we are done, so hereafter we assume that $\lambda_2
\neq\lambda_3,$ hence both are simple roots of $P.$ Suppose, to reach
a contradiction, that $\lambda_3\in (0,1);$ since $\lambda_3$ is a simple
root of $P$ and $P(0)>0,$ then there is $\varepsilon\in (0,1)$ such
that $\lambda_3\pm\varepsilon\in (0,1),$ $P(\lambda_3-\varepsilon)>0$ and
$P(\lambda_3+\varepsilon)<0.$ But this implies, since $P(1)>0,$ that
there is another real root at $(0,1)$ by Bolzano, a contradiction
by the foregoing. This shows that if $P(1)>0,$ then $P$ has a single real root at $(-\infty, 0),$ and
the other two roots are outside the unit disk.

Finally, notice that the remainder items (v)--(ix) are immediate consequence
of the previous ones, the proof is therefore completed.
\end{proof}

As immediate consequence of Proposition \ref{generalized BM model}, we
obtained our promised generalization of \cite[Proposition 3]{BullardMitra2002}, namely
the below:

\begin{teo}\label{the BM case}
Preserving the notations of Discussion \ref{starting BM point}, $B$ has
two eigenvalues inside the unit disk if and only if one and only one of the
following four conditions is satisfied.

\begin{enumerate}[(i)]

\item $k(\varphi_{\pi}-1)+(\varphi_x-2\sigma)(1+\beta)<0$ and
$k(\varphi_{\pi}-1)+\varphi_x(1-\beta)>0.$

\item $k(\varphi_{\pi}-1)+(\varphi_x-2\sigma)(1+\beta)=0,$
$\beta\varphi_x\neq \sigma(3+5\beta)+2k$ and 
$k(\varphi_{\pi}-1)+\varphi_x(1-\beta)<0.$

\item $k(\varphi_{\pi}-1)+(\varphi_x-2\sigma)(1+\beta)>0$ and
$k(\varphi_{\pi}-1)+\varphi_x(1-\beta)<0.$

\item $k(\varphi_{\pi}-1)+(\varphi_x-2\sigma)(1+\beta)>0,$
$k(\varphi_{\pi}-1)+\varphi_x(1-\beta)=0,$ and
$\beta\varphi_x\neq \sigma(\beta-1)-2k.$

\end{enumerate}
Therefore, in this case our model is generically determined, and the
determinacy region is
\begin{align*}
S'= &\{(k,\sigma,\beta,\varphi_x, \varphi_{\pi})\in S:\ 
k(\varphi_{\pi}-1)+(\varphi_x-2\sigma)(1+\beta)<0,
\ k(\varphi_{\pi}-1)+\varphi_x(1-\beta)>0\}\cup\\
& \{(k,\sigma,\beta,\varphi_x, \varphi_{\pi})\in S:\ 
k(\varphi_{\pi}-1)+(\varphi_x-2\sigma)(1+\beta)=0,
\beta\varphi_x\neq \sigma(3+5\beta)+2k,\\
& k(\varphi_{\pi}-1)+\varphi_x(1-\beta)<0\}\cup\\
& \{(k,\sigma,\beta,\varphi_x, \varphi_{\pi})\in S:\ 
k(\varphi_{\pi}-1)+(\varphi_x-2\sigma)(1+\beta)>0,
\ k(\varphi_{\pi}-1)+\varphi_x(1-\beta)<0\}\cup\\
& \{(k,\sigma,\beta,\varphi_x, \varphi_{\pi})\in S:\ 
k(\varphi_{\pi}-1)+(\varphi_x-2\sigma)(1+\beta)>0,\ 
\beta\varphi_x\neq\sigma(\beta-1)-2k,\\
& k(\varphi_{\pi}-1)+\varphi_x(1-\beta)=0\}.
\end{align*}
\end{teo}

\begin{rk}
Notice that, in Theorem \ref{the BM case}, the condition $k(\varphi_{\pi}-1)+\varphi_x(1-\beta)>0$
is what Woodford calls the \textbf{Taylor principle} (see
\cite{Woodford2001} and \cite{Woodford2003}); Theorem \ref{the BM case} shows, in particular, that
the Taylor principle is neither sufficient, nor necessary to guarantee determinacy. This fact was
already pointed out by Bullard and Mitra \cite[Propositions 1, 2 and 11]{BullardMitra2007}.
\end{rk}
Before moving to the next model, we want to consider the below:

\begin{ex}\label{no condition satisfied}
As we already proved, if all the conditions appearing in
Theorem \ref{the BM case} are not satisfied, then we can
not expect determinacy. As example, suppose that $\varphi_x=2.4,$
$\varphi_{\pi}=3.2,$ $\sigma=1,$ $\beta=0.99$ and $k=0.3;$ in this
case,
\[
B^{-1}=
\begin{pmatrix}[r] 1.3030  &  -1.0101 &   1\\
    -0.3030  &  1.0101  &  0\\
    2.4  &  3.2  &  0\end{pmatrix},
\]
its characteristic polynomial is
$P(x)=x^3-2.3131x^2-1.3899x+3.3939$ and its
eigenvalues are $-1.2003,$ $1.2482$ and $2.2653.$ Indeed, this
is because $P(-1)>0$ and $P(1)>0$ (cf.\, Proposition \ref{generalized BM model} (ix)).
\end{ex}

\begin{disc}\label{second BM model}
Now, we want to consider an inertial lagged data rule studied by Woodford \cite{Woodford2003} and Bullard and Mitra
\cite[page 1183]{BullardMitra2007}; in this specific model, the matrix
which is relevant to study determinacy is the below one:
\[
B=\begin{pmatrix}[r] 1+\frac{k\sigma}{\beta}& -\frac{\sigma}{\beta}
& \sigma\\ -\frac{k}{\beta}& \frac{1}{\beta} & 0\\ \varphi_x& 
\varphi_{\pi}& \varphi_{r}\end{pmatrix},
\]
where $k>0,$ $\sigma >0,$ $\beta\in (0,1),$ $\varphi_x\geq 0,$ $\varphi_{\pi}\geq 0,$ $\varphi_r\geq 0$ and
at least one among $\varphi_x,$ $\varphi_{\pi}$ and $\varphi_r$ is strictly positive. In this case, building upon
\cite[Appendix C, Proposition C.2]{Woodford2003}, Bullard and Mitra
\cite[Propositions 1, 2 and 11]{BullardMitra2007} gave
necessary and sufficient conditions to ensure determinacy; in this case, determinacy
holds if and only if $B$ has a single eigenvalue inside the unit disk. It is known
\cite[page 1198]{BullardMitra2007} that the characteristic polynomial of $B$ is
$P(x)=x^3-bx^2+cx+d,$ where
\[
b=1+\frac{1}{\beta}+\frac{k\sigma}{\beta}+\varphi_r>2,\ 
c=\frac{1}{\beta}+\left(1+\frac{1}{\beta}+\frac{k\sigma}{\beta}
\right)\varphi_r-\sigma\varphi_x,\ 
d=\frac{\sigma\left(\varphi_x+k\varphi_{\pi}-\sigma^{-1}\varphi_r\right)}{\beta}.
\]
Notice that, in this case, our parameter space is
\begin{align*}
S=& \{(k,\sigma,\beta,\varphi_x, \varphi_{\pi},\varphi_r)\in\R^6:\ k>0,
\ \sigma >0, 0<\beta<1,\ \varphi_x\geq 0,\ \varphi_{\pi}\geq 0,\ 
\varphi_r>0\},\\
& \cup\{(k,\sigma,\beta,\varphi_x, \varphi_{\pi},\varphi_r)\in\R^6:\ k>0,
\ \sigma >0, 0<\beta<1,\ \varphi_x\geq 0,\ \varphi_{\pi}> 0,
\varphi_r\geq 0\},\\
& \cup\{(k,\sigma,\beta,\varphi_x, \varphi_{\pi},\varphi_r)\in\R^6:\ k>0,
\ \sigma >0, 0<\beta<1,\ \varphi_x> 0,\ \varphi_{\pi}\geq 0,
\varphi_r\geq 0\}.
\end{align*}
\end{disc}
Once more, as immediate consequence of Proposition \ref{generalized BM model}, we
obtain the below:

\begin{teo}\label{the second BM case}
Preserving the notations of Discussion \ref{second BM model}, if
$\varphi_r<\sigma(k\varphi_{\pi}+\varphi_x),$ then $B$ has
a single eigenvalue inside the unit disk if and only if one of the
following conditions is satisfied.

\begin{enumerate}[(i)]

\item $k\sigma(\varphi_{\pi}-1)+(\sigma\varphi_x-2)(1+\beta)
-\varphi_r (2\beta+k\sigma+1)<0$ and
$k(\varphi_{\pi}-1+\varphi_r)+\varphi_x(1-\beta)>0.$

\item $k\sigma(\varphi_{\pi}-1)+(\sigma\varphi_x-2)(1+\beta)
-\varphi_r (2\beta+k\sigma+1)=0,$
$\sigma\beta\varphi_x\neq (3+5\beta+\varphi_r(1+3\beta+k\sigma))$ and
$k(\varphi_{\pi}-1+\varphi_r)+\varphi_x(1-\beta)<0.$

\item $k\sigma(\varphi_{\pi}-1)+(\sigma\varphi_x-2)(1+\beta)
-\varphi_r (2\beta+k\sigma+1)>0$ and
$k(\varphi_{\pi}-1+\varphi_r)+\varphi_x(1-\beta)<0.$

\item $k\sigma(\varphi_{\pi}-1)+(\sigma\varphi_x-2)(1+\beta)
-\varphi_r (2\beta+k\sigma+1)>0,$
$k(\varphi_{\pi}-1+\varphi_r)+\varphi_x(1-\beta)=0,$ and
$\sigma\beta\varphi_x\neq (\beta-2k\sigma-1+\varphi_r (1+k\sigma-\beta)).$

\end{enumerate}
Notice that, in this case, one has generic determinacy in
\begin{align*}
S'= & \{(k,\sigma,\beta,\varphi_x, \varphi_{\pi},\varphi_r)\in S:\ 
k\sigma(\varphi_{\pi}-1)+(\sigma\varphi_x-2)(1+\beta)
-\varphi_r (2\beta+k\sigma+1)<0,\\ & k(\varphi_{\pi}-1+\varphi_r)+\varphi_x(1-\beta)>0\}\\
& \cup\{(k,\sigma,\beta,\varphi_x, \varphi_{\pi},\varphi_r)\in S:\ 
k\sigma(\varphi_{\pi}-1)+(\sigma\varphi_x-2)(1+\beta)
-\varphi_r (2\beta+k\sigma+1)=0,\\ & k(\varphi_{\pi}-1+\varphi_r)+\varphi_x(1-\beta)<0,\
\sigma\beta\varphi_x\neq (3+5\beta+\varphi_r(1+3\beta+k\sigma))\}\\
& \cup\{(k,\sigma,\beta,\varphi_x, \varphi_{\pi},\varphi_r)\in S:\ 
k\sigma(\varphi_{\pi}-1)+(\sigma\varphi_x-2)(1+\beta)
-\varphi_r (2\beta+k\sigma+1)>0,\\ & k(\varphi_{\pi}-1+\varphi_r)+\varphi_x(1-\beta)<0,\}\\
& \cup\{(k,\sigma,\beta,\varphi_x, \varphi_{\pi},\varphi_r)\in S:\ 
k\sigma(\varphi_{\pi}-1)+(\sigma\varphi_x-2)(1+\beta)
-\varphi_r (2\beta+k\sigma+1)>0,\\ & k(\varphi_{\pi}-1+\varphi_r)+\varphi_x(1-\beta)=0,\
\sigma\beta\varphi_x\neq (\beta-2k\sigma-1+\varphi_r (1+k\sigma-\beta))\}.
\end{align*}

\end{teo}

\begin{rk}
Notice that both Theorem \ref{the BM case} and Theorem \ref{the second BM case} deal
even with non--generic boundary cases; in case of Theorem \ref{the second BM case}, Woodford
already observed \cite[footnote of page 672]{Woodford2003} that his
determinacy conditions are sufficient but not generically necessary, whereas the ones
we are providing in our results work with full generality. Finally, observe
that the determinacy conditions obtained in Theorem
\ref{the second BM case} only work, roughly speaking, for bounded
values of inertia, whereas Bullard and Mitra's ones
\cite[Proposition 2]{BullardMitra2007} work for unbounded
inertia.
\end{rk}

We end the discussion of this model with the below:

\begin{ex}\label{even low inertia yields no determinacy}
We want to single out that, of course, the assumption
$\varphi_r<\sigma(k\varphi_{\pi}+\varphi_x)$ is not solely
enough to ensure determinacy. As example, suppose that
$\beta=0.99,$ $k=0.3,$ $\sigma=1,$ $\varphi_x=4.3,$ $\varphi_{\pi}
=1.82$ and $\varphi_r=0.5;$ in this
case,
\[
B=
\begin{pmatrix}[r] 1.3030  &  -1.0101 &   1\\
    -0.3030  &  1.0101  &  0\\
    2.4  &  3.2  &  0.5\end{pmatrix},
\]
its characteristic polynomial is
$P(x)=x^3-2.8131x^2-2.1333x+4.3899$ and its
eigenvalues are $-1.3204,$ $1.0937$ and $3.0399.$ Indeed, this
is because $P(-1)=2.7101>0$ and $P(1)=0.4434>0$ (cf.\, Proposition \ref{generalized BM model} (ix)).
\end{ex}

\section{Studying a behavioral New Keynesian model}\label{behavioral model}
As consequence of the Routh-Hurwitz criterion \cite[Theorem 1.1]{MeinsmaRH}, Gabaix
(see \cite[Proposition 5.3]{GabaixNK} and \cite[Proposition 9.7]{GabaixNKOA})
obtained the below:

\begin{prop}[Gabaix]\label{stability of Gabaix model}
Let $P(x)=x^3-bx^2+cx-d\in\R [x],$ where $b>0,$ $c>0,$ $d>0$ and
$P(1)\neq 0.$ Then, the following statements are equivalent.

\begin{enumerate}[(i)]

\item $P$ has exactly one root at $(0,1)$ and
the remainder ones are outside the complex unit disk.

\item The sequence $(e_3,e_2,(e_2 e_1-e_3e_0)/e_2,e_0)$
contains exactly two sign changes, where $e_3=1-b+c-d,$ $e_2=3-b
-c+3d,$ $e_1=3+b+c-3d$ and $e_0=1+b+c+d.$

\item Either $e_2\leq 0$ or $e_2 e_1-e_3e_0\leq 0.$

\end{enumerate}

\end{prop}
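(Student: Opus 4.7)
The plan is to deploy the Routh--Hurwitz criterion cited in the statement via the classical Möbius substitution that converts unit-disk stability into left-half-plane stability. Specifically, I would substitute $x=(y+1)/(y-1)$ and define $R(y):=(y-1)^3 P\bigl((y+1)/(y-1)\bigr)$. Because $|x|<1$ if and only if $\operatorname{Re}(y)<0$ under this Möbius map, and because the hypothesis $P(1)\neq 0$ guarantees that $R$ has exact degree $3$ (no cancellation at the leading term), each root of $P$ in the open unit disk corresponds bijectively to a root of $R$ in the open left half-plane, and symmetrically outside. Expanding the products $(y+1)^k(y-1)^{3-k}$ for $k=0,1,2,3$ and collecting powers of $y$ produces (up to a harmless overall sign which can be absorbed into the convention) $R(y)=e_3y^3+e_2y^2+e_1y+e_0$, whose coefficients coincide with those listed in the statement.

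The standard Routh tableau for the cubic $R$ has first column exactly $(e_3,e_2,(e_2e_1-e_3e_0)/e_2,e_0)$; by the Routh--Hurwitz theorem (\cite[Theorem 1.1]{MeinsmaRH}) the number of sign changes in this column equals the number of right-half-plane roots of $R$, i.e.\ the number of roots of $P$ lying strictly outside the closed unit disk. Assertion (i) describes a distribution of one LHP root and two RHP roots for $R$, which is exactly ``two sign changes'', giving (i) $\Leftrightarrow$ (ii).

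To justify that in (i) the single inside-disk root of $P$ lies specifically in $(0,1)$ (and not merely in $\{|x|<1\}$), I would invoke Descartes' rule of signs on $P(-x)=-x^3-bx^2-cx-d$: it has no sign changes, so $P$ has no negative real roots. A single inside-disk root must then be real (otherwise its complex conjugate would supply a second inside-disk root) and positive, so it lies in $(-1,1)\cap(0,\infty)=(0,1)$; the hypothesis $P(1)\neq 0$ excludes the borderline $x=1$.

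Finally, the equivalence (ii) $\Leftrightarrow$ (iii) is a finite case analysis using that $e_0=1+b+c+d>0$. Since the four-term sequence ends positively, exactly two sign changes forces the first entry $e_3>0$ and admits only the sign patterns $(+,+,-,+)$, $(+,-,+,+)$, $(+,-,-,+)$; tracking the sign of the denominator $e_2$ when reading off the third Routh entry, these three cases are precisely repackaged by the disjunction ``$e_2\leq 0$ or $e_2e_1-e_3e_0\leq 0$''. The main obstacle in my view is twofold: first, carrying the Möbius substitution through without a sign error so that the coefficients of $R$ genuinely match the $e_i$ of the statement; and second, handling the borderline cases where $e_2$ or $e_2e_1-e_3e_0$ vanishes, where the classical Routh tableau needs the usual epsilon-perturbation trick. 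The non-strict inequality in (iii), together with $P(1)\neq 0$, exactly absorbs these degenerate cases into the statement.
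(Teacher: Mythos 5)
Your strategy --- the M\"obius substitution $x=(y+1)/(y-1)$ followed by a Routh--Hurwitz count on the transformed cubic --- is precisely the derivation the paper is implicitly relying on: the text introduces this proposition with ``as consequence of the Routh--Hurwitz criterion \ldots Gabaix obtained the below'' and supplies no proof of its own, so your reconstruction is the intended one, and your identifications $e_3=P(1)$, $e_0=-P(-1)$ and of the Routh first column $(e_3,e_2,(e_2e_1-e_3e_0)/e_2,e_0)$ are correct. One computational warning, though: if you actually expand $(y+1)^3-b(y+1)^2(y-1)+c(y+1)(y-1)^2-d(y-1)^3$, the coefficient of $y$ is $3+b-c-3d$, \emph{not} the $3+b+c-3d$ printed in the statement; the check $e_3+e_2+e_1+e_0=\lim_{y\to1}(y-1)^3P\bigl((y+1)/(y-1)\bigr)=(1+1)^3=8$ shows the printed $e_1$ is off by $2c$. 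So your sentence claiming the expanded coefficients ``coincide with those listed in the statement'' would fail as written; this is a transcription slip in the statement rather than in your method, but you should flag it rather than assert agreement.

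The more substantive gap is in (ii)$\Leftrightarrow$(iii). Your argument correctly extracts from (ii) that $e_3>0$ (two sign changes in a four-term sequence ending in the positive entry $e_0$ force the first entry positive) and then derives (iii). But (iii) does not imply $e_3>0$, so the converse direction does not close: take $P(x)=(x-\frac25)(x-\frac35)(x-4)=x^3-5x^2+4.24\,x-0.96$, which satisfies $b,c,d>0$ and $P(1)=-0.72\neq0$; here $e_2=3-b-c+3d=-3.36\le0$, so (iii) holds, yet the sign pattern is $(-,-,\ast,+)$ with an odd number of changes, so (ii) fails --- and (i) fails too, since two roots lie in $(0,1)$. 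Hence (iii)$\Rightarrow$(ii) is genuinely false at the level of generality of the statement, and no case analysis can rescue it; the equivalence needs $P(1)>0$ either as a standing hypothesis or as a fourth clause of (iii). (Note that $P(1)>0$ is forced by (i) together with $P(0)=-d<0$, exactly as in part (ii) of Proposition \ref{sufficient linear condition}, which is how Gabaix can use the criterion in his model.) So your plan proves (i)$\Rightarrow$(ii)$\Rightarrow$(iii) but silently assumes the missing positivity when going back from (iii).
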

Let us briefly review what was the original motivation for Gabaix
to look at Proposition \ref{stability of Gabaix model}; indeed, building
upon a Taylor stability criterion which includes behavioral agents
\cite[Proposition 3.1]{GabaixNK}, Gabaix introduced a behavioral
New Keynesian Model with backward looking terms (see
\cite[Proposition 5.3]{GabaixNK} and \cite[Proposition 9.7]{GabaixNKOA}). In
this extended model, the relevant matrix to ensure determinacy is the
below one:
\[
B=\begin{pmatrix} \frac{\sigma\phi_x\beta^f+\beta^f+k\sigma}{M\beta^f}&
\frac{\sigma (\beta\phi_{\pi}-\alpha^f\eta\rho\chi-1)}{M\beta^f}&
\frac{\alpha^f \sigma((\eta-1)\rho+1)}{M\beta^f}\\
-\frac{k}{\beta^f}& \frac{\alpha^f\eta\rho\chi+1}{\beta^f}
& \frac{\alpha^f(-\eta\rho+\rho-1)}{\beta^f}\\
0& \eta\chi& 1-\eta\end{pmatrix}.
\]
In this case, since in this model there is a single predetermined
variable and the remainder two ones are jump variables, again
using \cite{BlanchardKhan1980}, determinacy holds if and only if
$B$ has a single real eigenvalue less than $1$ in absolute value, and
the remainder two ones are complex number with modulus greater
than one.

We also want to single out that, among all the parameters involved in the
above matrix, both $\phi_x$ and $\phi_{\pi}$ are non--negative, and
both $M,M^f\in [0,1]$ represent a degree of behavioralism in the
models studied by Gabaix, as already observed by Cochrane in
\cite{Cochranecomments}.

Going back to Proposition \ref{stability of Gabaix model}, notice that the expression $e_2 e_1-e_3e_0$ is quadratic in terms
of the coefficients of our polynomial; our next goal will be to
provide a sufficient condition to guarantee stability that only involves
a linear expression in the coefficients of the polynomial; namely:

\begin{prop}\label{sufficient linear condition}
Let $P(x)=x^3-bx^2+cx-d\in\R [x],$ where $b>0,$ $c>0,$ $d>0$ and
$P(1)\neq 0.$ Then, the following assertions hold.

\begin{enumerate}[(i)]

\item All the real roots of $P$ are contained in the interval
$(0,1+M),$ where $M=\max\{b,c,d\}.$

\item If $P$ has only one real root at $(0,1),$ then $P(1)\geq 0.$

\item If either $3-2b+c\leq 0$ or $b\geq 3,$ and $P(1)>0,$ then $P$ has a single
real root at $(0,1).$

\item If $b-c>0$ and $P(1)>0,$ then $P$ has a single real root at $(0,1).$

\end{enumerate}

\end{prop}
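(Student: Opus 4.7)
For (i), my plan is a direct sign analysis at the endpoints. For $x\leq 0$ every monomial of $P(x)=x^3-bx^2+cx-d$ is non-positive while the constant contributes $-d<0$, so $P(x)<0$; no real root can lie in $(-\infty,0]$. For $x\geq 1+M$ I would write $P(x)=x^2(x-b)+cx-d$; since $M\geq b$ forces $x-b\geq 1$ and $x^2\geq (1+M)^2>M\geq d$, we get $x^2(x-b)>d$, and the positive term $cx$ then yields $P(x)>0$. Hence every real root lies in $(0,1+M)$.

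For (ii), I would factor $P(x)=(x-\lambda)Q(x)$ with $\lambda\in(0,1)$ the sole real root in $(0,1)$ and $Q$ a real quadratic, then split into two cases. If $Q$ has complex conjugate roots $\mu,\overline{\mu}$ then $P(1)=(1-\lambda)\lvert 1-\mu\rvert^2\geq 0$. If instead $Q$ has real roots $\lambda_2,\lambda_3$, part (i) places them in $(0,1+M)$; the uniqueness of $\lambda$ in $(0,1)$ together with the standing hypothesis $P(1)\neq 0$ forces them into $(1,1+M)$, so $(1-\lambda_2)(1-\lambda_3)>0$ and hence $P(1)>0$. Either way $P(1)\geq 0$.

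For (iii), I would invoke the Budan--Fourier theorem (Theorem~\ref{Budan Fourier}) on the interval $(0,1)$. The Fourier sequence is $P_{seq}(x)=(P(x),\,3x^2-2bx+c,\,6x-2b,\,6)$; at $x=0$ it equals $(-d,c,-2b,6)$ with sign pattern $(-,+,-,+)$ and so $v_0=3$, and at $x=1$ it is $(P(1),3-2b+c,6-2b,6)$ with first and last entries positive. Under the hypothesis either $3-2b+c\leq 0$ or $6-2b\leq 0$, and a short case-by-case inspection of the two middle signs gives $v_1=2$ in every non-degenerate configuration, whence Budan--Fourier combined with Bolzano ($P(0)<0<P(1)$) forces exactly one real root in $(0,1)$. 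In the boundary sign patterns where $v_1$ collapses to $0$, so that the Budan--Fourier count alone only yields $r\in\{1,3\}$, I would rule out $r=3$ by Vieta: three roots $\lambda_1,\lambda_2,\lambda_3\in(0,1)$ give $b=\sum\lambda_i<3$ and, setting $\epsilon_i=1-\lambda_i\in(0,1)$, the identity $c-(2b-3)=\sum_{i<j}\epsilon_i\epsilon_j>0$, contradicting $b\geq 3$ or $3-2b+c\leq 0$ respectively.

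For (iv), my plan is to run the same Budan--Fourier template. The regime $b\geq 3$ is immediate: then $b>c$ yields $3-2b+c<3-b\leq 0$, and (iii) applies. This reduces the analysis to $b<3$, where the only sign configuration not already covered by (iii) is $3-2b+c>0$ with $6-2b>0$; here $v_1=0$ and the Budan--Fourier count alone is inconclusive, giving $r\in\{1,3\}$. The main obstacle, and the delicate heart of (iv), is excluding $r=3$ in this last regime using only $b>c$ and $P(1)>0$: my approach would be to translate three putative roots in $(0,1)$ into Vieta identities and combine the resulting symmetric-function relations with $b-c>0$ and $1-b+c-d>0$ to reach a contradiction. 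This auxiliary step is where I expect the bulk of the work for (iv) to concentrate.
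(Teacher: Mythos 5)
Parts (i)--(iii) of your proposal are correct, and in two places sharper than the paper's own treatment: for (i) you prove the upper bound $1+M$ directly instead of quoting the classical Cauchy bound as the paper does, and for (iii) --- where the paper only asserts that the claim ``was already shown'' in earlier proofs --- your Vieta identities $b=\lambda_1+\lambda_2+\lambda_3<3$ and $3-2b+c=\sum_{i<j}(1-\lambda_i)(1-\lambda_j)>0$ correctly dispose of the degenerate sign patterns (such as $3-2b+c=0$) in which $v_1$ drops to $0$ and the Budan--Fourier count alone only yields $r\in\{1,3\}$. Your case analysis in (ii) also works, provided ``only one real root at $(0,1)$'' is read with multiplicity, which is how the paper's own proof treats it.

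The genuine gap is in (iv): you explicitly defer the exclusion of three roots in $(0,1)$ in the regime $b<3$, $3-2b+c>0$, $6-2b>0$, and that step cannot be carried out, because assertion (iv) is false as stated. If $\lambda_1,\lambda_2,\lambda_3\in(0,1)$ are the roots, then automatically $P(1)=\prod_i(1-\lambda_i)>0$ and $b-c=\sum_i\lambda_i-\sum_{i<j}\lambda_i\lambda_j>0$ (each product $\lambda_i\lambda_j$ is smaller than either factor), so the hypotheses of (iv) can never eliminate this configuration; concretely, $P(x)=(x-0.3)(x-0.5)(x-0.7)=x^3-1.5x^2+0.71x-0.105$ satisfies $b,c,d>0$, $P(1)=0.105>0$ and $b-c=0.79>0$, yet has three real roots in $(0,1)$. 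The paper's own ``proof'' of (iv) is a back-reference to the argument of Proposition \ref{the conditions to apply Budan Fourier}, where the inequality $b>c$ excludes the sign pattern $(-,+,-,+)$ of $P_{seq}(1)$ arising when $P(1)<0$ on the interval $(1,b)$; here $P(1)>0$, the problematic pattern on $(0,1)$ is $(+,+,+,+)$, and $b>c$ does not exclude it. So your instinct about where the difficulty of (iv) concentrates was exactly right, but what sits there is a counterexample, not a missing lemma.
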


\begin{proof}
Let $x_0\in [0,+\infty),$ and notice that $P(-x_0)=
-x_0^3-bx_0^2-cx_0-d<0;$ this shows that all the real roots
of $P$ are strictly positive. The fact that all of them are less
than $1+\max\{b,c,d\}$ is just by the classical Cauchy bound
\cite[Theorem 1]{HirstMacey1997}; this proves part (i).

Now, assume, to reach a contradiction, that $P(1)<0;$ keeping in mind
that $P(0)=-d<0,$ we have to distinguish two cases. On the one hand, if
$P(x_0)>0$ for some $x_0\in (0,1),$ then Bolzano's Theorem implies that
there are at least two real roots at $(0,1)$ (indeeed, because
$P(0)<0,$ $P(x_0)>0$ and $P(1)<0$), so we get a contradiction. On the
other hand, if $P(x_0)\leq 0$ for all $x_0\in (0,1)$ and
$P(\lambda)=0$ for some $\lambda\in (0,1),$ then $\lambda$ has to be
of multiplicity two, and this is again a contradiction.

Finally, notice that parts (iii) and (iv) were already shown in the
course of the proofs of Propositions \ref{some properties of matrix} and \ref{generalized BM model};
the proof is therefore completed.
\end{proof}

\begin{disc}\label{determinacy region at Gabaix model}
Our plan here is to use Proposition \ref{sufficient linear condition}
to partially describe the sets where determinacy holds in Gabaix model; indeed, remember
that in his model, the relevant matrix to ensure determinacy is the
below one:
\[
B=\begin{pmatrix} \frac{\sigma\phi_x\beta^f+\beta^f+k\sigma}{M\beta^f}&
\frac{\sigma (\beta\phi_{\pi}-\alpha^f\eta\rho\chi-1)}{M\beta^f}&
\frac{\alpha^f \sigma((\eta-1)\rho+1)}{M\beta^f}\\
-\frac{k}{\beta^f}& \frac{\alpha^f\eta\rho\chi+1}{\beta^f}
& \frac{\alpha^f(-\eta\rho+\rho-1)}{\beta^f}\\
0& \eta\chi& 1-\eta\end{pmatrix}.
\]
One can check, using the expression of the characteristic polynomial of $B$
written down by Gabaix \cite[page 64]{GabaixNKOA} jointly with the value
of this polynomial evaluated at $1$ \cite[page 66]{GabaixNKOA} the below facts.

First of all, part (ii) of Proposition \ref{sufficient linear condition}
shows that the determinacy region must be contained inside
\begin{align*}
\widetilde{S}:= &\{(k,\sigma,\alpha,\alpha^f,\beta,\beta^f,M,M^f,\eta,
\rho,\chi,\phi_x,\phi_{\pi})\in\R^{13}: \\
& (1-\beta^f-\alpha\chi(1-\rho))(1-M+\sigma\phi_x)+k\sigma (\phi_{\pi}-1)\geq 0\}.
\end{align*}
Secondly, part (iv) of Proposition \ref{sufficient linear condition}
shows that the determinacy region must contain the subset
of $\widetilde{S}$ given by the inequality
\begin{align*}
& \left(\frac{\sigma((\beta^f-1)+\beta (\eta-1)-\eta\alpha^f\rho\chi)}{M\beta^f}\right)
-\left(\frac{k\sigma}{M\beta^f}\right)\phi_{\pi}\\
& +\frac{(\eta-1)(k\sigma+1+\beta+M-M\beta^f)+\eta(\alpha^f\chi(\rho(M-1)-M)-1)
+M+\beta^f+k\sigma}{M\beta^f}>0.
\end{align*}
Finally, part (iii) of Proposition \ref{sufficient linear condition}
shows that the determinacy region must contain, on the one
hand, the subset of $\widetilde{S}$ given by the inequality
\[
\left(\left(\frac{\sigma}{M}\right)\phi_x+\frac{(1-\eta)M\beta^f+\eta\alpha^f\rho
\chi M+M+\beta^f+k\sigma}{M\beta^f}\right)\geq 3,
\]
and, on the other hand, the subset of $\widetilde{S}$ given by the inequality
\begin{align*}
& \left(\frac{\sigma((\beta^f-1)+\beta (\eta-1)-\eta\alpha^f\rho\chi)}{M\beta^f}\right)
-\left(\frac{k\sigma}{M\beta^f}\right)\phi_{\pi}\\
& +\frac{(\eta-1)(k\sigma+1+\beta+M-M\beta^f)+\eta(\alpha^f\chi(\rho(M-1)-M)-1)
+M+\beta^f+k\sigma}{M\beta^f}\\
& +\left(\left(\frac{\sigma}{M}\right)\phi_x+\frac{(1-\eta)M\beta^f+\eta\alpha^f\rho
\chi M+M+\beta^f+k\sigma}{M\beta^f}\right)\geq 3.
\end{align*}
\end{disc}

One can easily check that Proposition \ref{stability of Gabaix model} and Proposition
\ref{sufficient linear condition} can be applied to obtain necessary
and sufficient (respectively, sufficient) conditions to guarantee
the determinacy of the model studied by Bullard and Mitra in
\cite[page 1185]{BullardMitra2007}; here, we only write down the
sufficient conditions of determinacy given by Proposition
\ref{sufficient linear condition} in their specific model
(cf. \cite[Propositions 3 and 4]{BullardMitra2007}).

\begin{teo}\label{Bullard and Mitra other sufficient conditions}
The matrix
\[
\frac{1}{1-\varphi_x \sigma}\begin{pmatrix}1-\beta^{-1}k\sigma
(\varphi_{\pi}-1)& \beta^{-1}\sigma (\varphi_{\pi}-1)& 
\sigma\varphi_r\\ -k\beta^{-1}(1-\varphi_x \sigma)&
\beta^{-1}(1-\varphi_x \sigma)& 0\\
\varphi_x (1+\beta^{-1}k\sigma)-\beta^{-1}k\varphi_{\pi}&
\beta^{-1}(\varphi_{\pi}-\varphi_x \sigma)& \varphi_r
\end{pmatrix},
\]
(where $k>0,$ $\sigma>0,$ $\beta>0,$ $\varphi_x\geq 0,$
$\varphi_{\pi}\geq 0,$ $\varphi_r\geq 0,$ and at least one among
$\varphi_x,\varphi_{\pi},\varphi_r$ strictly positive) has exactly one
eigenvalue at $(0,1)$ if $\varphi_x<\sigma^{-1},$ $\varphi_{\pi}\leq 1,$
\[
(1-\varphi_x \sigma)\left(
\beta(1-\beta)-\varphi_r(1+(1+\varphi_r)(\beta+k\sigma))\right)
+\beta\left(\beta+\varphi_r (\beta+1)+k\sigma (1-\varphi_{\pi})\right)<0,
\]
and, in addition, at least one of the below inequalities holds:
\[
\begin{cases} (1-\varphi_x \sigma)\left(\beta(2-3\beta)
-\varphi_r (1+(1+\varphi_r)(\beta+k\sigma))\right)
+\beta\left(2\beta(1+\varphi_r)+2k\sigma(1-\varphi_{\pi})\right)\geq 0,\\
(1-\varphi_x \sigma)(1-3\beta)+\beta(1+\varphi_r)
+k\sigma (1-\varphi_{\pi})\geq 0,\\
(1-\varphi_x\sigma)\left(\beta-\varphi_r-(1+\varphi_r)
(\beta+k\sigma)\right)+\beta\left(\beta (1+\varphi_r)
+k\sigma (1-\varphi_{\pi})\right)>0.\end{cases}
\]
In this case, our parameter space is
\begin{align*}
S=& \{(k,\sigma,\beta,\varphi_x, \varphi_{\pi},\varphi_r)\in\R^6:\ k>0,
\ \sigma >0, 0<\beta<1,\ \varphi_x\geq 0,\ \varphi_{\pi}\geq 0,\ 
\varphi_r>0\},\\
& \cup\{(k,\sigma,\beta,\varphi_x, \varphi_{\pi},\varphi_r)\in\R^6:\ k>0,
\ \sigma >0, 0<\beta<1,\ \varphi_x\geq 0,\ \varphi_{\pi}> 0,
\varphi_r\geq 0\},\\
& \cup\{(k,\sigma,\beta,\varphi_x, \varphi_{\pi},\varphi_r)\in\R^6:\ k>0,
\ \sigma >0, 0<\beta<1,\ \varphi_x> 0,\ \varphi_{\pi}\geq 0,
\varphi_r\geq 0\},
\end{align*}
and generic determinacy holds in the below subset:
\begin{align*}
S'=& \{(k,\sigma,\beta,\varphi_x, \varphi_{\pi},\varphi_r)\in S:
\ \sigma\varphi_x<1,\ \varphi_{\pi}\leq 1,\\
& (1-\varphi_x \sigma)\left(
\beta(1-\beta)-\varphi_r(1+(1+\varphi_r)(\beta+k\sigma))\right)
+\beta\left(\beta+\varphi_r (\beta+1)+k\sigma (1-\varphi_{\pi})\right)<0,\\
& (1-\varphi_x \sigma)\left(\beta(2-3\beta)
-\varphi_r (1+(1+\varphi_r)(\beta+k\sigma))\right)
+\beta\left(2\beta(1+\varphi_r)+2k\sigma(1-\varphi_{\pi})\right)\geq 0\},\\
& \cup
\{(k,\sigma,\beta,\varphi_x, \varphi_{\pi},\varphi_r)\in S:
\ \sigma\varphi_x<1,\ \varphi_{\pi}\leq 1,\\
& (1-\varphi_x \sigma)\left(
\beta(1-\beta)-\varphi_r(1+(1+\varphi_r)(\beta+k\sigma))\right)
+\beta\left(\beta+\varphi_r (\beta+1)+k\sigma (1-\varphi_{\pi})\right)<0,\\
& (1-\varphi_x \sigma)(1-3\beta)+\beta(1+\varphi_r)
+k\sigma (1-\varphi_{\pi})\geq 0\}\\
& \cup\{(k,\sigma,\beta,\varphi_x, \varphi_{\pi},\varphi_r)\in S:
\ \sigma\varphi_x<1,\ \varphi_{\pi}\leq 1,\\
& (1-\varphi_x \sigma)\left(
\beta(1-\beta)-\varphi_r(1+(1+\varphi_r)(\beta+k\sigma))\right)
+\beta\left(\beta+\varphi_r (\beta+1)+k\sigma (1-\varphi_{\pi})\right)<0,\\
& (1-\varphi_x\sigma)\left(\beta-\varphi_r-(1+\varphi_r)
(\beta+k\sigma)\right)+\beta\left(\beta (1+\varphi_r)
+k\sigma (1-\varphi_{\pi})\right)>0\}.
\end{align*}
\end{teo}
We also want to illustrate this model with the below:

\begin{ex}\label{example of the third Bullard Mitra model}
Suppose that
$\beta=0.99,$ $k=0.3,$ $\sigma=1,$ $\varphi_x=4.3,$ $\varphi_{\pi}
=1.82$ and $\varphi_r=0.5;$ in this
case,
\[
\begin{pmatrix}[r] -0.2277  &  -0.2510 & -0.1515\\
    -0.3030  &  1.0101  &  0\\
    -1.5308  &  0.7591  &  -0.1515\end{pmatrix},
\]
its characteristic polynomial is
$P(x)=x^3-0.6309x^2-0.6566x+0.1530$ and its
eigenvalues are $-0.6758,$ $0.2057$ and $1.1010.$ Indeed, this
is because $P(-1)=-0.8212<0$ and $P(1)=-0.1344<0$ (cf.\, Proposition \ref{generalized BM model} (vi)). Notice
that, in this case, $0.6309<2.$
\end{ex}

We want to conclude this section with the below:

\begin{ex}\label{non--negativity at 1 is important}
We want to single out that the assumption
$P(1)\geq 0$ is necessary (but not sufficient) to ensure determinacy. As example, suppose that
$\phi_x=1,$ $\phi_{\pi}=2,$ $\alpha=0.5,$ $\rho=0.35,$ $\beta=0.99,$
$\eta=0.05,$ $\chi=0.3,$ $m=0.85,$ $\sigma=0.2,$ and
$k=0.053.$ In this
case,
\[
B=
\begin{pmatrix}[r] 1.4244  &   0.2323 &   1.1488\\
    -0.0535  &  1.0177  &  -0.3371\\
    0  &  0.0150  &  0.9500\end{pmatrix},
\]
its characteristic polynomial is
$P(x)=x^3-3.3920x^2+3.7870x-1.3952$ and its
eigenvalues are $1.3861,$ $1.0030 + 0.0240i$ and $1.0030 -0.0240i.$ Indeed, this
is because $P(1)=-2.2650e-04<0.$  Notice that, even in this case, one
has that $b-c<0.$
\end{ex}

\section{Potential limitations}\label{potential limitations}
So far in this paper, the use of the Budan--Fourier Theorem to address the determinacy issue has
been restricted to models where the characteristic equation one has
to deal with is of degree three, so one natural question to ask is the
potential use of this result to tackle models where the characteristic
equation has higher degree; our goal in this section will be to briefly
explain what might be the potential limitations of doing so. We illustrate
it in what follows.

Indeed, we consider one of the models studied by Bhattarai, Lee and Park
in \cite{BhattaraiLeePark2014}; in such a model, the parameter space is
\begin{align*}
S=\{& (\beta,\eta,\gamma,\rho_R,k,\varphi,\phi_Y,\phi_{\pi})\in\R^8:\ 
\beta\in (0,1),\ \eta\in [0,1),\ \rho_R\in [0,1),\ \gamma\in [0,1],\\
& k>0,\ \varphi>0,\ \phi_Y>0,\ \phi_{\pi}>0\},
\end{align*}
and one has to look at the polynomial $P(x)=a_5x^5-a_4x^4+a_3x^3
-a_2x^2+a_1x-a_0,$ where
\begin{align*}
& a_5=\beta,\ a_4=\beta+1+\beta(\eta+\gamma+\rho_R)+
(1-\eta)k\left(\varphi+\frac{1}{1-\eta}+(1-\rho_R)\phi_Yk^{-1}\beta\right),\\
& a_3=1+(\beta+1)(\eta+\gamma+\rho_R)+\beta (\eta\gamma+\eta\rho_R+\gamma
\rho_R)
\\ &+(1-\eta)(1-\rho_R)k\left(\left(\phi_{\pi}+
\frac{\rho_R}{1-\rho_R}\right)\left(\varphi+
\frac{1}{1-\eta}\right)+(1+\beta\gamma)\phi_Y k^{-1}
+\frac{1}{1-\rho_R}\left(\frac{\eta}{1-\eta}\right)
\right),\\
& a_2=(\eta+\gamma+\rho_R)+(\beta+1)(\eta\gamma+\eta\rho_R+\gamma
\rho_R)+\beta\eta\gamma\rho_R\\
& +(1-\eta)(1-\rho_R)k\left(\left(\phi_{\pi}+
\frac{\rho_R}{1-\rho_R}\right)\left(\varphi+
\frac{1}{1-\eta}\right)+\phi_Y\gamma k^{-1}\right),\\
& a_1=\eta\gamma+\rho_R (\eta+\gamma+\eta\gamma+\beta\eta\gamma),
\ a_0=\eta\gamma\rho_R.
\end{align*}
Since $P(-x)\leq 0$ for any $x\in [0,+\infty),$ all the real roots of
$P$ need to be strictly positive; moreover, since the degree of $P$ is
odd, Bolzano's Theorem guarantees that $P$ has at least one real
root. On the other hand, since in this case determinacy holds if and
only if $P$ has exactly three roots inside the unit disk, at least
one of them has to be real, hence contained at (0,1) (indeed, otherwise
$P$ would have $0,$ $2$ or $4$ roots inside the unit disk); summing up, as
observed in \cite{BhattaraiLeePark2014}, a necessary condition for
determinacy is $P(1)>0.$ Therefore, in this case, the determinacy region
must be contained inside
\[
\left\{(\beta,\eta,\gamma,\rho_R,k,\varphi,\phi_Y,\phi_{\pi})\in
S:\ \phi_{\pi}+\frac{(1-\gamma)(1-\beta)}{k(\varphi+1)}\phi_Y >1
\right\}.
\]
Our next goal will be to show that the Budan--Fourier Theorem provide
other necessary conditions for determinacy; indeed, let $v_1$ be the
number of sign variations of
\begin{align*}
P_{seq} (1)=& (a_5-a_4+a_3-a_2+a_1-a_0,5a_5-4a_4+3a_3-2a_2+a_1,
2(10a_5-6a_4+3a_3-a_2),\\ & 6(10a_5-4a_4+a_3),24(5a_5-a_4),120a_5).
\end{align*}
Moreover, since one can easily check that $v_0=5,$ the Budan--Fourier Theorem
tells us that the number $r$ of real roots of $P$ at (0,1) is less
or equal than $5-v_1,$ and that $5-v_1-r$ is either zero, two or four. In
this case, since determinacy holds if and only if there are exactly three
roots inside the unit circle, it follows that $r$ can only be either
one or three, which implies that $5-v_1$ must be zero, one
or three, which is equivalent to say that $v_1$ can only be zero, two
or four. Summing up, this shows that the determinacy region must be contained
inside
\begin{align*}
\widetilde{S}:=& \left\{(\beta,\eta,\gamma,\rho_R,k,\varphi,\phi_Y,\phi_{\pi})\in
S:\ \phi_{\pi}+\frac{(1-\gamma)(1-\beta)}{k(\varphi+1)}\phi_Y >1,\
v_1=0\right\}\\
& \cup\left\{(\beta,\eta,\gamma,\rho_R,k,\varphi,\phi_Y,\phi_{\pi})\in
S:\ \phi_{\pi}+\frac{(1-\gamma)(1-\beta)}{k(\varphi+1)}\phi_Y >1,\
v_1=2\right\}\\
& \cup\left\{(\beta,\eta,\gamma,\rho_R,k,\varphi,\phi_Y,\phi_{\pi})\in
S:\ \phi_{\pi}+\frac{(1-\gamma)(1-\beta)}{k(\varphi+1)}\phi_Y >1,\
v_1=4\right\}.
\end{align*}
The reader will easily note that, while the necessary and sufficient conditions obtained
in \cite[Proposition of page 222]{BhattaraiLeePark2014} by means of a stronger
version of the Rouch\'e Theorem \cite[Theorem 2]{Lloyd1979} require
the evaluation of transcendental functions, the necessary conditions
we give through Budan--Fourier just involve polynomial evaluation.

\section*{Conclusion}
By means of the Budan--Fourier Theorem \cite[Theorem 1]{Akritas1982}, we
have shown in a completely analytical way the existence and uniqueness
of real roots for several linear systems of equations arising from
New Keynesian models; indeed, we have done so, first of all, for a
model when the money supply follows an exogenous path \cite[3.4.2]{GaliKeynesian2015}, secondly
when a monetary authority responds to lagged values of output
(see \cite[Proposition 3 and Appendix C]{BullardMitra2002} and \cite[Propositions 1, 2 and 11]{BullardMitra2007}), and finally when agents do not fully
understand future policies (see \cite[Proposition 5.3]{GabaixNK} and \cite[Proposition 9.7]{GabaixNKOA}). We
also pinpoint the potential limitations of these methods to tackle
models where the characteristic equation is of high degree. It is well known that, when
the characteristic equation is of degree two, there are several more
elementary ways to tackle this issue; for instance, Chatelain and
Ralf \cite[Proposition 1]{ChatelainRalf2018} use the fact that, when
the characteristic equation is of degree two, the eigenvalues
are non--linear functions of the trace and the determinant of the corresponding
matrix \cite[pages 63--67]{Azariadis1993}.

One thing the reader may ask is why we have only used the Budan--Fourier
Theorem to estimate the number of real roots of a polynomial in a given
interval; this is because the models studied in this paper involved
between four and thirteen parameters, and from our perspective it is not
obvious, neither to evaluate polynomials in the whole interval we
are interested in, nor to make too many manipulations with
them. This prevented us to employ other techniques, like Sturm
sequences \cite[Corollary 1.2.10]{RealAlgebraicGeometry}, to tackle this complicated
issue; of course, obtaining complete necessary and sufficient determinacy conditions require, in
general, not only to look at the real roots, but also at the complex ones. This
explains why in recent works (e.g. \cite{Lubik2007}, \cite{GabaixNK}, \cite{BhattaraiLeePark2014}) authors dealing
with the determinacy issue use more sophisticated tools, like the
Routh--Hurwitz criterion \cite[Theorem 1.1]{MeinsmaRH}, the Schur--Cohn
criterion (see \cite[page 198, Th. (43,1)]{Mardenbook} or
\cite[page 27, 5.3.]{LaSallebook}) or the
Rouch\'e Theorem \cite[Theorem 2]{Lloyd1979}.

We would like to single out that our motivation comes from the issue of indeterminacy
of the rational expectations equilibrium that complicated the
conduct of monetary policy, and also with the multiple equilibria
puzzle problem arising from New Keynesian models; we hope the techniques
used throughout this paper can
help to tackle, not only these issues, but also others appearing
in models different from the ones considered here. Once again, as we already mentioned in the Introduction, we repeat that all the techniques and most of the examples presented here are not new, what might be original in this manuscript is the organization of the material and the emphasis, hoping that will be potentially useful for researchers working in this subject. The list of references at the end gives an indication of the provenance of the fundamental ideas and techniques, and might suggest directions for additional research. 

\section*{Funding}
This work was partially supported by Spanish Ministerio de Econom\'ia y Competitividad [MTM2019-104844GB-I00].

\section*{Acknowledgements}
The authors would like to thank Davide Debortoli, Jordi Gal\'i, Thomas Lubik, Lluc Puig and Jes\'us Fern\'andez-Villaverde for useful and fruitful advices and feedback about the contents of this paper.



\bibliographystyle{alpha}
\bibliography{AFBoixReferences}

\end{document}